\DeclarePairedDelimiter{\Ket}{|}{\rrangle}
\newcommand\BraKet[2]{{\llangle}{#1}{|}{#2}{\rrangle}}
\newcounter{definition}
\newenvironment{definition}[1][]{\refstepcounter{definition}\par
   \textit{Definition~\thedefinition.} }
\newcounter{theorem}
\newenvironment{theorem}[1][]{\refstepcounter{theorem}
   \textit{Theorem~\thetheorem.} }
\newcounter{proposition}
\newenvironment{proposition}[1][]{\refstepcounter{proposition}
   \textit{Proposition~\theproposition.} }
\newcounter{example}
\newenvironment{example}[1][]{\refstepcounter{example}
   \textit{Example~\theexample.} }
\renewenvironment{proof}{
   \textit{Proof.} }{\qed}
\begin{document}

\title{Stories in the two-state vector formalism}

\author{Patryk Michalski\orcidlink{0009-0009-0305-7356}}
\affiliation{Institute of Theoretical Physics, University of Warsaw, Pasteura 5, 02-093 Warsaw, Poland}

\author{Andrzej Dragan\orcidlink{0000-0002-5254-710X}}
\email{dragan@fuw.edu.pl}
\affiliation{Institute of Theoretical Physics, University of Warsaw, Pasteura 5, 02-093 Warsaw, Poland}
\affiliation{Centre for Quantum Technologies, National University of Singapore, 3 Science Drive 2, 117543 Singapore, Singapore}

\date{\today}

\begin{abstract}
The two-state vector formalism of quantum mechanics is a time-symmetrized approach to standard quantum theory. In our work, we aim to establish rigorous foundations for the future investigation within this formalism. We introduce the concept of a story --- a compatible pair consisting of a two-state vector and an ideal measurement. Using this concept, we examine the structure of the space comprising all two-state vectors. We analyze the problem of distinguishability and confirm that some pairs of two-state vectors or their statistical mixtures cannot be physically distinguished. In particular, we discuss an example of a two-state vector that is indistinguishable from a statistical mixture of separable two-state vectors and provide an example of a two-state vector that can be distinguished from every such mixture. This leads us to formulate the definition of a strictly non-separable two-state vector as a genuine manifestation of entanglement between the past and the future. 
\end{abstract}

\maketitle

\section{Introduction}
One of the distinguishing features of conventional quantum mechanics is the time asymmetry of ideal measurements, which contrasts with the time-symmetric nature of unitary evolution \cite{gmh_1994}. Every pure quantum state determines only the probability distribution of future measurement results. Thus, based on a single measurement outcome, it is impossible to recover the state in which the system was before the measurement took place. At the same time, the results of past measurements clearly define the present state of the system. This differs from classical physics, where there is no distinction between past and future measurements. 

In 1964, Aharonov, Bergmann and Lebowitz proved that the apparent time asymmetry is not an inherent property of quantum theory, as it can be eliminated by introducing a modified description of ideal measurements \cite{abl_1964}. Building on this observation, a novel approach to quantum mechanics was developed \cite{av_1991}. It was called the \emph{two-state vector formalism} in accordance with its central concept --- the two-state vector, which describes the state of a system using pairs of forward- and backward-evolving ordinary state vectors. Since its introduction, this formalism has led to the exploration of numerous peculiar effects \cite{av_1991,vaa_1987,v_1998,av_2003,v_1993,mrv_2007,aav_1988,av_1990,dfv_2013}, evoking some controversies as to their interpretation \cite{l_1989,p_1989,fc_2014,s_2016,sa_2018,s_2018,h_2016,laz_2013}. Recently, the necessity of adopting a framework consistent with the two-state vector formalism has been demonstrated in an attempt to construct a covariant quantum field theory of tachyons, raising a number of interesting questions to be addressed \cite{d_2023}.

During the considerable time that has passed since the development of the two-state vector formalism, several works have aimed to investigate its underlying mathematical structure \cite{ra_1995, aptv_2009, s_2014}. Nevertheless, at least one technical aspect of the original formulation has yet to be addressed satisfactorily. The purpose of this work is to draw attention to this aspect and use it as a basis to establish a rigorous framework for investigation within the formalism. It should be emphasized that our focus is primarily on highlighting parallels and discrepancies with standard quantum theory. Consequently, various generalizations, such as weak measurements or POVMs, fall outside the scope of this study.

In the following sections, new concepts necessary for further considerations are defined and crucial results are presented. We discuss the concept of a story and analyze the structure of the twin space, which is the set of all possible two-state vectors, with respect to the ability of its elements to form a story with certain measurements. Next, from the new perspective, we address the issue of the distinguishability of two-state vectors and highlight clear differences from the standard quantum formalism. Finally, we focus on the distinguishability of non-separable and separable two-state vectors in the context of hypothetical entanglement between the past and the future.

\section{Formal description of two-state vectors}

For a system represented by a Hilbert space $\mathcal{H}$, all postulated two-state vectors $\Ket{\Psi}$ at a given time $t$ can be identified with rays in a twin space $\mathcal{H} \otimes \mathcal{H}^\star$ (with $\mathcal{H}^\star$ denoting the space dual to $\mathcal{H}$):
\begin{equation}\label{eq:TSV}
    \Ket{\Psi} = \sum_k \alpha_k \ket{\psi_k} \otimes \bra{\phi_k} \in \mathcal{H} \otimes \mathcal{H}^\star.
\end{equation}
The vectors $\ket{\psi_k}$ represent forward-evolving quantum states, determined by a \emph{pre-selection} measurement performed at some time prior to $t$. Conversely, the vectors $\bra{\phi_k}$ represent backward-evolving states, arising from a \emph{post-selection} measurement performed after time $t$. These notions extend the standard quantum formalism by introducing time-symmetric boundary conditions.
Without additional constraints, the vectors $\ket{\psi_k}$ and $\bra{\phi_k}$ are arbitrary elements of the respective Hilbert spaces.

While in the simplest case a two-state vector takes the form of a single pair $\ket{\psi} \otimes \bra{\phi}$, in general one must consider linear combinations of such product states, as in Equation~\eqref{eq:TSV}. Such superpositions arise naturally when the system of interest is entangled with an ancillary subsystem \cite{av_1991}, or in relativistic settings when the state of a superluminal particle undergoes a Lorentz transformation \cite{d_2023}.

The twin space is equipped with a canonical inner product, which for the most general vectors $\Ket{\Psi}, \Ket{\Psi'} \in \mathcal{H} \otimes \mathcal{H}^\star$ has the form:
\begin{equation}
    \BraKet{\Psi}{\Psi'} = \sum_{k,\,l} \bar{\alpha}_k \alpha'_l \braket{\psi_k}{\psi'_l} \braket{\phi'_l}{\phi_k} .
\end{equation}
Moreover, there is a canonical isomorphism of $\mathcal{H} \otimes \mathcal{H}^\star$ with the algebra of Hilbert-Schmidt operators, so the elements of the twin space can be treated as linear operators with conventionally defined trace functional $\Tr: \mathcal{H} \otimes \mathcal{H}^\star \to \mathbb{C}$. For every two-state vector $\Ket{\Psi} \in \mathcal{H} \otimes \mathcal{H}^\star$, this functional yields:
\begin{equation}
    \Tr{\Ket{\Psi}} = \sum_k \alpha_k \braket{\phi_k}{\psi_k}.
\end{equation}
Although one may introduce a normalization convention for two-state vectors using the canonical inner product, in applications presented in this work the normalization factor is insignificant.

For a quantum system described by a two-state vector $\Ket{\Psi} \in \mathcal{H} \otimes \mathcal{H}^\star$, one of the primary problems to which the two-state vector formalism can be applied is finding the probabilities for distinct outcomes of an ideal projective measurement. Notice that every observable $\hat{C}$ defines a projective partition of unity $\{\hat{P}_i\}$, where each operator $\hat{P}_i$ is a projection onto a subspace corresponding to a distinguishable measurement outcome $c_i$. From now on, we will treat the set $\{\hat{P}_i\}$ as a sufficient characterization of the measurement.
The formula for the probability of an outcome $c_n$, also known as the the Aharonov-Bergmann-Lebowitz rule, takes the form \cite{av_1991}:
\begin{equation}\label{ABL}
    \mathrm{Prob}(c_n) = \frac{\left|\Tr{\hat{P}_n \Ket{\Psi}}\right|^2}{\sum_i \left|\Tr{\hat{P}_i \Ket{\Psi}}\right|^2}.
\end{equation}
From the derivation of Equation~\eqref{ABL}, it follows that when the denominator on the right-hand side is zero, the process leading to the description of the system by a two-state vector $\Ket{\Psi}$ cannot occur with a given measurement. This observation is a key point of our work.

\section{Story}

Previous research on the two-state vector formalism (or equivalently, the twin space formalism) overlooked an important fact that must be acknowledged in order to construct a consistent theory. Given an arbitrary measurement $\{\hat{P}_i\}$ within a space $\mathcal{H}$, certain pairs of pre- and post-selected system states are not obtainable. To demonstrate this, notice that every projection operator $\hat{P}_i$ can be represented as:
\begin{equation}\label{projection_decomposition}
    \hat{P}_i = \sum_{j \,\in\, J_i} \ketbra{\xi_j},
\end{equation}
where $J_i$ is a complete subset of indices that enumerate eigenstates of $\hat{C}$ with eigenvalue $c_i$, and $\{\ket{\xi_j}\}$ forms an orthonormal basis of the space $\mathcal{H}$. Suppose we attempt to construct a two-state vector $\Ket{\Psi} = \ket{\xi_j} \otimes \bra{ \xi_k}$ using two distinct elements of this basis. Then, for any $\hat{P}_i$, the trace $\Tr{\hat{P}_i \Ket{\Psi}}$ vanishes, and the denominator on the right-hand side of Equation~\eqref{ABL} is zero. This indicates that, in the considered scenario, two distinct elements of the basis cannot represent the evolved pre- and post-selected system states respectively.

Consequently, the possibility of considering a two-state vector $\Ket{\Psi} \in \mathcal{H} \otimes \mathcal{H}^\star$ is fully determined by the choice of the measurement $\{\hat{P}_i\}$, in contrast to the case of conventional state vectors. This justifies the introduction of a new notion for compatible pairs $(\Ket{\Psi},\{\hat{P}_i\})$. 
\begin{definition}
    A pair $(\Ket{\Psi},\{\hat{P}_i\})$ consisting of a two-state vector $\Ket{\Psi} \in \mathcal{H} \otimes \mathcal{H}^\star$ and a measurement $\{\hat{P}_i\}$ forms a \emph{story} if for at least one index $i$ the following condition is satisfied:
    \begin{equation}\label{eq:story}
        \left| \Tr{\hat{P}_i \Ket{\Psi}} \right| > 0.
    \end{equation}
\end{definition}
The condition~\eqref{eq:story} guarantees that an appropriate pair of evolved pre- and post-selected system states can be obtained, leading to the description of the system by the two-state vector $\Ket{\Psi}$.

\section{Structure of the twin space}\label{Structure_of_the_TS}

From the above considerations it follows that a story is the fundamental concept that holds physical significance within the discussed formalism. This observation motivates a comprehensive analysis of the twin space structure regarding the possibility of forming stories with two-state vectors. We commence by stating that, in order to build up every possible story, we need to consider the whole twin space $\mathcal{H} \otimes \mathcal{H}^\star$. In other words, there are no two-state vectors $\Ket{\Psi} \in \mathcal{H} \otimes \mathcal{H}^\star$ which do not form a story with any  measurement.

\begin{theorem}
\label{th1}
    For every vector $\Ket{\Psi} \in \mathcal{H} \otimes \mathcal{H}^\star$, there exists a measurement $\{\hat{P}_i\}$ such that the pair $(\Ket{\Psi},\{\hat{P}_i\})$ forms a story.
\end{theorem}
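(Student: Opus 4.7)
The plan is to reduce everything to a statement about operators via the canonical Hilbert-Schmidt isomorphism mentioned just after Eq.~(2). I would identify $\Ket{\Psi}$ with the operator $\hat{\Psi} = \sum_k \alpha_k \ketbra{\psi_k}{\phi_k}$ on $\mathcal{H}$, so that $\Tr{\hat{P}_i \Ket{\Psi}}$ coincides with the ordinary operator trace of $\hat{P}_i \hat{\Psi}$. Since two-state vectors are identified with rays in the twin space, I may assume $\hat{\Psi} \neq 0$. The goal then becomes to exhibit an orthonormal basis $\{\ket{\xi_j}\}$ of $\mathcal{H}$ for which at least one diagonal entry $\bra{\xi_j}\hat{\Psi}\ket{\xi_j}$ is nonzero, because the rank-one projective partition $\hat{P}_j := \ketbra{\xi_j}$ will then automatically satisfy condition (\ref{eq:story}).

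The central step I would invoke is a standard lemma specific to complex Hilbert spaces: if $\bra{\xi}\hat{A}\ket{\xi} = 0$ for every unit vector $\ket{\xi}$, then by the polarization identity every off-diagonal element $\bra{\eta}\hat{A}\ket{\xi}$ vanishes as well, forcing $\hat{A} = 0$. Applied to $\hat{A} = \hat{\Psi}$, the contrapositive produces a unit vector $\ket{\xi_1}$ with $\bra{\xi_1}\hat{\Psi}\ket{\xi_1} \neq 0$.

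With $\ket{\xi_1}$ in hand, I would extend it to a full orthonormal basis $\{\ket{\xi_j}\}$ of $\mathcal{H}$ by Gram-Schmidt and take the rank-one projective measurement $\hat{P}_j := \ketbra{\xi_j}$. Then $\Tr{\hat{P}_1 \hat{\Psi}} = \bra{\xi_1}\hat{\Psi}\ket{\xi_1} \neq 0$, which is precisely the story condition (\ref{eq:story}) with $i = 1$, so the constructed pair $(\Ket{\Psi},\{\hat{P}_j\})$ forms a story.

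The only non-routine ingredient is the polarization step that produces $\ket{\xi_1}$; everything else is bookkeeping. That step relies crucially on $\mathcal{H}$ being a complex Hilbert space---over the reals, any skew-symmetric operator has vanishing diagonal in every basis, and the theorem would fail. In the standard complex setting assumed throughout the paper, however, I do not anticipate any further obstacle.
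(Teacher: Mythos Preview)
Your argument is correct. The key lemma you invoke---that a bounded operator $\hat{A}$ on a complex Hilbert space with $\bra{\xi}\hat{A}\ket{\xi}=0$ for all $\ket{\xi}$ must vanish---is exactly what is needed, and your remark that the complex field is essential is on point. One cosmetic simplification: instead of extending $\ket{\xi_1}$ to a full orthonormal basis you could simply take the two-element partition $\hat{P}_1=\ketbra{\xi_1}$, $\hat{P}_2=\mathds{1}-\hat{P}_1$, which sidesteps any discussion of bases in infinite dimensions.

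The paper's proof reaches the same conclusion by a different packaging of the same idea. It fixes an orthonormal basis from the outset, writes $\Ket{\Psi}=\sum_{k,l}\alpha_{kl}\ket{k}\otimes\bra{l}$, and then argues by exhaustion: if some diagonal $\alpha_{nn}\neq 0$ use $\ket{n}$; if the matrix is antisymmetric use $\tfrac{1}{\sqrt 2}(\ket{m}+i\ket{n})$; otherwise use $\tfrac{1}{\sqrt 2}(\ket{m}+\ket{n})$. This is nothing but the polarization identity carried out by hand on two basis vectors, so the two proofs share the same mathematical core. Your version is shorter and basis-free, and it makes the dependence on the complex scalars explicit; the paper's version is more concrete and exhibits the witnessing projector explicitly in terms of the given matrix entries, which some readers may find more transparent.
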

\begin{proof}
    Let $\{\ket{k}\}$ denote an orthonormal basis of the space $\mathcal{H}$. Every vector $\Ket{\Psi} \in \mathcal{H} \otimes \mathcal{H}^\star$ can be decomposed in the induced product basis $\{\ket{k} \otimes \bra{l}\}$ of the twin space:
    \begin{equation}
        \Ket{\Psi} = \sum_{k,\,l} \alpha_{kl} \ket{k} \otimes \bra{l}.
    \end{equation}
    We proceed with proof by exhaustion.

    Case 1: When $\exists_{n} \,\, \alpha_{nn} \ne 0$, the vector $\Ket{\Psi}$ forms a story with a measurement for which one of the elements of the projective partition of unity is the projection onto a state $\ket{n}$.
    
    Case 2: When $\forall_{k,\,l} \,\, \alpha_{kl} = - \alpha_{lk}$, the vector $\Ket{\Psi}$ forms a story with a measurement for which one of the elements of the projective partition of unity is the projection onto a state $\frac{1}{\sqrt{2}} (\ket{m} + i \ket{n})$, where $m \ne n$ and $\alpha_{mn} \ne 0$. This follows from the inequality:
        \begin{multline}
            \left|\Tr{\frac{1}{2} (\ket{m} + i \ket{n}) (\bra{m} - i \bra{n}) \Ket{\Psi}}\right| =\\
            = \frac{1}{2}\left|\sum_{k,\,l} \alpha_{kl} \bra{l} \left(\ketbra{m} - i \ketbra{m}{n} + i \ketbra{n}{m} + \ketbra{n}\right)\ket{k} \right|
             \\ =
            \frac{1}{2}\left| i (\alpha_{mn} - \alpha_{nm}) \right| = | \alpha_{mn} | > 0.
        \end{multline}
    
    Case 3: When $\exists_{m,\,n; \,\, m\ne n}$ $ \alpha_{mn} \ne - \alpha_{nm}$, $\alpha_{mn} \ne 0$ and $\forall_{k} \,\, \alpha_{kk} = 0$, the vector $\Ket{\Psi}$ forms a story with a measurement for which one of the elements of the projective partition of unity is the projection onto a state $\frac{1}{\sqrt{2}} (\ket{m} + \ket{n})$. This follows from the inequality:
        \begin{multline}
            \left|\Tr{\frac{1}{2} (\ket{m} + \ket{n}) (\bra{m} + \bra{n}) \Ket{\Psi}}\right| =\\
            = \frac{1}{2}\left|\sum_{k,\,l} \alpha_{kl} \bra{l} \left(\ketbra{m} + \ketbra{m}{n} + \ketbra{n}{m} + \ketbra{n}\right)\ket{k} \right| \\=
            \frac{1}{2}\left| \alpha_{mn} + \alpha_{nm} \right| > 0.
        \end{multline}
\end{proof}

Theorem~\ref{th1} guarantees that every element of the twin space forms a story with a suitably chosen measurement. This allows us to distinguish between two types of two-state vectors: those that form a story with \emph{every} measurement, and those that fail to form a story with \emph{at least one} measurement. The set of all two-state vectors $\Ket{\Psi} \in \mathcal{H} \otimes \mathcal{H}^\star$ that do not form a story with at least one measurement turns out to be a proper subset of the twin space $\mathcal{H} \otimes \mathcal{H}^\star$. This is ensured by the necessary condition of tracelessness, which we discuss below. As a consequence, two-state vectors that are not traceless must form a story with every measurement.

Taking a slightly different starting point, Reznik and Aharonov proposed to exclude traceless two-state vectors from the set of physical two-state vectors \cite{ra_1995}. However, such restriction seems unjustified for the case of ideal projective measurements, as from Theorem~\ref{th1} it follows that traceless vectors are physically achievable.

\begin{proposition}
    Every vector $\Ket{\Psi} \in \mathcal{H} \otimes \mathcal{H}^\star$ that does not form a story with some   measurement is \emph{traceless}:
    \begin{equation}
        \Tr{\Ket{\Psi}} = \sum_k \alpha_k \braket{\phi_k}{\psi_k} = 0.
    \end{equation}
\end{proposition}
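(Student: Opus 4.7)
The plan is to exploit the fact that the measurement $\{\hat{P}_i\}$ is a projective partition of unity, so $\sum_i \hat{P}_i = \hat{\mathbb{1}}$. Together with linearity of the trace functional on $\mathcal{H} \otimes \mathcal{H}^\star$, this immediately gives
\begin{equation*}
    \Tr{\Ket{\Psi}} = \Tr\Bigl(\sum_i \hat{P}_i \Ket{\Psi}\Bigr) = \sum_i \Tr{\hat{P}_i \Ket{\Psi}}.
\end{equation*}

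Next I would unpack the hypothesis. Saying that $\Ket{\Psi}$ fails to form a story with some particular measurement $\{\hat{P}_i\}$ is, by negation of the condition (\ref{eq:story}) in the definition, the statement that $|\Tr \hat{P}_i \Ket{\Psi}| = 0$ for \emph{every} index $i$, i.e.\ each summand on the right-hand side above vanishes. Hence $\Tr \Ket{\Psi} = 0$, which is exactly the tracelessness assertion; substituting the general decomposition $\Ket{\Psi} = \sum_k \alpha_k \ket{\psi_k}\otimes\bra{\phi_k}$ and using the formula $\Tr \Ket{\Psi} = \sum_k \alpha_k \braket{\phi_k}{\psi_k}$ stated earlier in the paper recovers the explicit form in the proposition.

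There is essentially no obstacle here: the whole content of the proposition is the contrapositive of the elementary observation that the sum of the $\Tr \hat{P}_i \Ket{\Psi}$ reconstructs the full trace. The only subtle point worth flagging in the write-up is the correct reading of the negation of the story condition (one needs \emph{all} outcomes to have vanishing trace, not just one), so I would state this step explicitly to keep the logical structure transparent.
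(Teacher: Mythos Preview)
Your proposal is correct and follows essentially the same argument as the paper: both use the negated story condition to get $\Tr{\hat{P}_i\Ket{\Psi}}=0$ for all $i$, then sum over $i$ and invoke $\sum_i \hat{P}_i=\hat{\mathbb{1}}$ to conclude tracelessness. The only cosmetic difference is that the paper expands the trace explicitly in the $\alpha_k,\ket{\psi_k},\bra{\phi_k}$ decomposition at each step, whereas you appeal to linearity of the trace abstractly before substituting the decomposition at the end.
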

\begin{proof}
    Suppose that the vector $\Ket{\Psi}$ does not form a story with a   measurement $\{\hat{P}_i\}$. For every $\hat{P}_i$, we have:
    \begin{equation}
        \Tr{\hat{P}_i \Ket{\Psi}} = \sum_k \alpha_k \bra{\phi_k} \hat{P}_i \ket{\psi_k} = 0.
    \end{equation}
    Hence, using the fact that the set $\{\hat{P}_i\}$ is a projective partition of unity and interchanging the order of summation, we obtain:
    \begin{multline}
        \Tr{\Ket{\Psi}} = \sum_k \alpha_k \braket{\phi_k}{\psi_k} = \sum_k \alpha_k \bra{\phi_k} \sum_i \hat{P}_i \ket{\psi_k} \\= 
        \sum_i \Tr{\hat{P}_i \Ket{\Psi}} = 0.
    \end{multline}
\end{proof}

The basic objective of the conventional two-state vector formalism is to compute the probabilities of distinct measurement outcomes. Therefore, in practical considerations, the measurement is assumed to be fixed. It is then possible to investigate the sets of states that either do or do not form a story with this measurement. In particular, one may ask whether these sets form linear subspaces within the twin space.

\begin{proposition}
    The subset $\mathcal{N}_{\{\hat{P}_i\}} \subset \mathcal{H} \otimes \mathcal{H}^\star$ of all states that do not form a story with a given measurement $\{\hat{P}_i\}$ is a linear subspace.
\end{proposition}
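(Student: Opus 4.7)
The plan is to characterize $\mathcal{N}_{\{\hat{P}_i\}}$ as the common kernel of a finite family of linear functionals on the twin space. By the definition of a story, the pair $(\Ket{\Psi},\{\hat{P}_i\})$ forms a story precisely when $|\Tr{\hat{P}_i \Ket{\Psi}}| > 0$ for \emph{some} index $i$. Hence $\Ket{\Psi} \in \mathcal{N}_{\{\hat{P}_i\}}$ if and only if that inequality fails for \emph{every} $i$, i.e.\ $\Tr{\hat{P}_i \Ket{\Psi}} = 0$ for all $i$. This immediately rewrites the set as
\begin{equation}
\mathcal{N}_{\{\hat{P}_i\}} = \bigcap_i \ker f_i, \qquad f_i(\Ket{\Psi}) := \Tr{\hat{P}_i \Ket{\Psi}}.
\end{equation}

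Next I would verify that each $f_i$ is a linear functional on $\mathcal{H}\otimes\mathcal{H}^\star$. Under the canonical Hilbert--Schmidt identification noted in the paper, left multiplication by the fixed operator $\hat{P}_i$ is linear in its argument, and the trace is linear, so their composition $f_i$ is linear. Alternatively, one can check this directly from the expansion $\Ket{\Psi} = \sum_k \alpha_k \ket{\psi_k}\otimes\bra{\phi_k}$: the formula $f_i(\Ket{\Psi}) = \sum_k \alpha_k \bra{\phi_k}\hat{P}_i\ket{\psi_k}$ is manifestly linear in the coefficients, hence in $\Ket{\Psi}$.

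Finally, I would invoke the two standard facts that the kernel of any linear functional is a linear subspace, and that any intersection of linear subspaces is again a linear subspace. Applied to the family $\{\ker f_i\}$, this yields that $\mathcal{N}_{\{\hat{P}_i\}}$ is a linear subspace of $\mathcal{H}\otimes\mathcal{H}^\star$, as claimed. There is no real obstacle here --- the result is a routine consequence of the linearity of the trace and of operator multiplication. The only conceptual point worth being explicit about is the quantifier flip: ``not forming a story'' requires the condition (\ref{eq:story}) to fail for \emph{every} $i$, so $\mathcal{N}_{\{\hat{P}_i\}}$ is an intersection of kernels rather than a union, which is what makes it a subspace rather than merely a union of subspaces.
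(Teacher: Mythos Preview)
Your proof is correct and follows essentially the same idea as the paper's: both rely on the linearity of $\Ket{\Psi}\mapsto\Tr{\hat{P}_i\Ket{\Psi}}$, with the paper verifying closure under linear combinations directly while you phrase the same content as ``$\mathcal{N}_{\{\hat{P}_i\}}=\bigcap_i\ker f_i$'' and invoke standard facts. The mathematical substance is identical; your version is just the slightly more abstract packaging of the paper's hands-on computation.
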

\begin{proof}
    Let us take $\Ket{\Psi_1}, \Ket{\Psi_2} \in \mathcal{N}_{\{\hat{P}_i\}}$. Then, for every index $i$ and $\alpha_1, \alpha_2 \in \mathbb{C}$, we have:
    \begin{multline}
        \Tr{\hat{P}_i (\alpha_1 \Ket{\Psi_1} + \alpha_2 \Ket{\Psi_2})} \\=
        \alpha_1 \Tr{\hat{P}_i \Ket{\Psi_1}} + \alpha_2 \Tr{\hat{P}_i \Ket{\Psi_2}} = 0.
    \end{multline}
    Thus, $\alpha_1 \Ket{\Psi_1} + \alpha_2 \Ket{\Psi_2} \in \mathcal{N}_{\{\hat{P}_i\}}$, so $\mathcal{N}_{\{\hat{P}_i\}}$ is a linear subspace.
\end{proof}

Since the subset $\mathcal{N}_{\{\hat{P}_i\}} \subset \mathcal{H} \otimes \mathcal{H}^\star$ is a linear subspace, it follows that the subset $\mathcal{H} \otimes \mathcal{H}^\star \setminus \mathcal{N}_{\{\hat{P}_i\}} \subset \mathcal{H} \otimes \mathcal{H}^\star$ of all vectors that form a story with the measurement $\{\hat{P}_i\}$ is not a linear subspace. Furthermore, it is meaningful to examine the dimensionality of the subspace $\mathcal{N}_{\{\hat{P}_i\}}$ in the case of finite-dimensional twin space. We show below that the dimension of $\mathcal{N}_{\{\hat{P}_i\}}$ is strictly smaller than the dimension of the whole twin space $\mathcal{H} \otimes \mathcal{H}^\star$ by exactly the number of elements in the set $\{\hat{P}_i\}$ (denoted by $\# \{\hat{P}_i\}$). Consequently, the subspace $\mathcal{N}_{\{\hat{P}_i\}}$ is of zero measure in the twin space.

\begin{proposition}
    The dimension of a subspace $\mathcal{N}_{\{\hat{P}_i\}} \subset \mathcal{H} \otimes \mathcal{H}^\star$ of all vectors that do not form a story with a given measurement $\{\hat{P}_i\}$ is $\dim \mathcal{N}_{\{\hat{P}_i\}} = (\dim \mathcal{H})^2 - \# \{\hat{P}_i\}$ for $\dim \mathcal{H} < \infty$.
\end{proposition}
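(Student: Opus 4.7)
The plan is to realize $\mathcal{N}_{\{\hat{P}_i\}}$ as the kernel of an explicit linear map and then apply the rank--nullity theorem. Setting $d = \dim \mathcal{H}$ and $N = \#\{\hat{P}_i\}$, I would consider the linear map $L : \mathcal{H} \otimes \mathcal{H}^\star \to \mathbb{C}^N$ defined by $L(\Ket{\Psi}) = (\Tr(\hat{P}_1 \Ket{\Psi}), \ldots, \Tr(\hat{P}_N \Ket{\Psi}))$. By the definition of $\mathcal{N}_{\{\hat{P}_i\}}$, this subspace is exactly $\ker L$, and since $\dim(\mathcal{H} \otimes \mathcal{H}^\star) = d^2$, the assertion reduces to verifying that $\mathrm{rank}(L) = N$, i.e.\ that $L$ is surjective.

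To establish surjectivity I would construct explicit preimages of the standard basis of $\mathbb{C}^N$. Using the Hilbert-Schmidt isomorphism noted at the beginning of the paper, each $\hat{P}_i$ may itself be regarded as an element of the twin space $\mathcal{H} \otimes \mathcal{H}^\star$. Because $\{\hat{P}_i\}$ is a projective partition of unity, the projections are mutually orthogonal, $\hat{P}_i \hat{P}_j = \delta_{ij} \hat{P}_i$, and each $\Tr(\hat{P}_i) = \mathrm{rank}(\hat{P}_i)$ is a strictly positive integer. Setting $\Ket{\Psi_i} := \hat{P}_i / \Tr(\hat{P}_i)$, a one-line computation gives $\Tr(\hat{P}_j \Ket{\Psi_i}) = \delta_{ij}$, so $L(\Ket{\Psi_i})$ is the $i$-th standard basis vector of $\mathbb{C}^N$. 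Surjectivity follows immediately, and rank--nullity then yields $\dim \mathcal{N}_{\{\hat{P}_i\}} = d^2 - N$.

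The one point that merits attention---and in my view the only non-bookkeeping step---is the translation of operator-algebra orthogonality of the $\hat{P}_i$ into linear independence of the $N$ trace functionals $\Ket{\Psi} \mapsto \Tr(\hat{P}_i \Ket{\Psi})$ on the twin space; this is essentially the non-degeneracy of the Hilbert-Schmidt pairing, and the explicit preimages $\Ket{\Psi_i}$ above make it concrete without any appeal to duality arguments. In contrast to Theorem~\ref{th1}, no case analysis is required, because the codimension count follows purely from counting linearly independent linear constraints.
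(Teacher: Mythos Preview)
Your argument is correct. Both proofs ultimately count $N$ independent linear constraints, but the routes differ. The paper fixes an orthonormal eigenbasis $\{\ket{\xi_j}\}$ adapted to $\{\hat{P}_i\}$, expands $\Ket{\Psi}$ in coordinates $\alpha_{kl}$, and reads off each constraint as $\sum_{j\in J_i}\alpha_{jj}=0$; independence of the constraints is implicit in the fact that the index sets $J_i$ are disjoint. Your approach is basis-free: you package the constraints into a single linear map $L$ and prove surjectivity by exhibiting the projectors themselves, suitably normalized, as preimages of the standard basis of $\mathbb{C}^N$. What your version buys is that the independence of the constraints is made explicit and rigorous via rank--nullity, and no coordinate expansion is needed; what the paper's version buys is a concrete picture of \emph{which} coordinates the constraints touch (only the diagonal entries, grouped by eigenspace), which connects more directly to the basis decomposition used elsewhere in Section~\ref{Structure_of_the_TS}.
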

\begin{proof}
    For the given measurement $\{\hat{P}_i\}$, let $\{\ket{\xi_j}\}$ denote an orthonormal basis of the space $\mathcal{H}$ introduced in Equation~\eqref{projection_decomposition}. Every vector $\Ket{\Psi} \in \mathcal{H} \otimes \mathcal{H}^\star$ can be decomposed in the product basis:
    \begin{equation}
        \Ket{\Psi} = \sum_{k,\,l} \alpha_{kl} \ket{\xi_k} \otimes \bra{\xi_l}.
    \end{equation}
    Suppose that a vector $\Ket{\Psi}$ does not form a story with the   measurement $\{\hat{P}_i\}$, which implies that for all indices $i$, we have:
    \begin{equation}
        \Tr{\hat{P}_i \Ket{\Psi}} = \sum_{k,\,l} \alpha_{kl} \bra{\xi_l} \left(\sum_{j \,\in\, J_i} \ketbra{\xi_j} \right) \ket{\xi_k} = \sum_{j \,\in\, J_i} \alpha_{jj} = 0.
    \end{equation}
    The above condition reduces the dimensionality of the subspace $\mathcal{N}_{\{\hat{P}_i\}}$ by exactly one. Since the dimension of the twin space $\mathcal{H} \otimes \mathcal{H}^\star$ is $(\dim \mathcal{H})^2$, we obtain $\dim \mathcal{N}_{\{\hat{P}_i\}} = (\dim \mathcal{H})^2 - \# \{\hat{P}_i\}$.
\end{proof}

To summarize, we have shown that every element of the twin space forms a story with an appropriately chosen measurement. A two-state vector may not form a story with some measurement only if it is traceless. The two-state vectors that are not traceless form a story with every measurement. When the  measurement is fixed, the twin space splits into two disjoint sets. The set comprising two-state vectors that form a story with the given measurement is not a linear subspace. However, the remaining set of two-state vectors that do not form a story with this measurement is a linear subspace of zero measure in the whole twin space.

\section{Distinguishability of two-state vectors}

Let us consider two ensembles of physical systems, each described by a different quantum state (either pure or mixed). By performing a measurement on either of these ensembles, one can obtain a probability distribution of distinct measurement outcomes. In the standard quantum theory, for any arbitrary pair of such ensembles, there exists a measurement that yields different probability distribution for each ensemble. This leads to the conclusion that all pairs of conventional quantum states are physically distinguishable (provided they are not the same). 

The situation for two-state vectors is different. There are pairs of two-state vectors that cannot be distinguished in the manner described above. This can be easily shown using the time symmetry inherent in the two-state vector formalism. Specifically, under the action of time reversal, two-state vectors transform as follows:
\begin{equation}
    \Ket{\Psi} = \sum_k \alpha_k \ket{\psi_k} \otimes \bra{\phi_k} \;\;\;\longmapsto\;\;\; \Ket{\Psi'} = \sum_k \bar{\alpha}_k \ket{\phi_k} \otimes \bra{\psi_k}.
\end{equation}
For every story $(\Ket{\Psi},\{\hat{P}_i\})$, the probabilities of distinct measurement outcomes are the same as in the story $(\Ket{\Psi'},\{\hat{P}_i\})$. Thus, the vectors $\Ket{\Psi}$ and $\Ket{\Psi'}$ are physically indistinguishable, although in general, they correspond to different elements of the twin space.

There are more non-trivial examples of indistinguishability. As an analogue to conventional mixed states, one can consider statistical ensembles of states described by two-state vectors. Two possible definitions of such ensembles have been discussed --- either using the weighted average of the probabilities given by Equation~\eqref{ABL} or introducing relative frequencies of different pre- and post-selection attempts \cite{s_2014}. We do not seek to rule out any of these definitions. Indeed, all the considerations that follow apply equally to both approaches.

It turns out that some two-state vectors cannot be distinguished from statistical mixtures of different two-state vectors. We provide an example of such a vector below.

\begin{example}\label{example_1}
    Let $\{\ket{0}, \ket{1}\}$ denote an orthonormal basis of the space $\mathcal{H}$. Consider the two-state vector:
    \begin{equation}
        \Ket{\Psi} = \ket{0} \otimes \bra{1} \in \mathcal{H} \otimes \mathcal{H}^\star.
    \end{equation}
    In the following, we show that there exists a statistical mixture of two-state vectors that replicates the probability distribution of distinct measurement outcomes in every story $(\Ket{\Psi}, \{\hat{P}_i\})$. Notice that, for every measurement $\{\hat{P}_1,\hat{P}_2\}$, the probabilities of distinct measurement outcomes in a story $(\Ket{\Psi},\{\hat{P}_1,\hat{P}_2\})$ are equal, since we have:
    \begin{multline}
        \Tr{\hat{P}_1 \Ket{\Psi}} = \bra{1} \hat{P}_1 \ket{0} = \bra{1} (\mathds{1} - \hat{P}_2) \ket{0} = - \bra{1} \hat{P}_2 \ket{0} \\= - \Tr{\hat{P}_2 \Ket{\Psi}}.
    \end{multline}
    Let us define a statistical mixture consisting of vectors $\Ket{\Psi_1} = \ket{0} \otimes \bra{0}$ and $\Ket{\Psi_2} = \ket{1} \otimes \bra{1}$ that occur with equal probabilities. Then, we get:
    \begin{multline}
        \Tr{\hat{P}_2 \Ket{\Psi_1}} = \Tr{(\mathds{1} - \hat{P}_1) \Ket{\Psi_1}} = 1 - \Tr{\hat{P}_1 \Ket{\Psi_1}} \\= \Tr{\hat{P}_1 \Ket{\Psi_2}},
    \end{multline}
    \begin{multline}
        \Tr{\hat{P}_2 \Ket{\Psi_2}} = \Tr{(\mathds{1} - \hat{P}_1) \Ket{\Psi_2}} = 1 - \Tr{\hat{P}_1 \Ket{\Psi_2}} \\= \Tr{\hat{P}_1 \Ket{\Psi_1}}.
    \end{multline}
    It follows that the probabilities of distinct measurement outcomes for this classical mixture are also equal. \qed
\end{example}

In conclusion, the general formula for calculating probabilities of measurement outcomes in the two-state vector formalism, given by Equation~\eqref{ABL}, leads to an unexpected result. Contrary to the case of conventional quantum state vectors, some two-state vectors are physically indistinguishable from other two-state vectors or their statistical mixtures. In the next section, we will further explore the implications of this fact.

\section{Separable and non-separable two-state vectors}

The tensor product structure of the twin space makes it possible to introduce the notion of a separable two-state vector. The definition of separability can be formulated analogously to the case of pure quantum states.
\begin{definition}
    A (pure) two-state vector $\Ket{\Phi} \in \mathcal{H} \otimes \mathcal{H}^\star$ is separable if and only if it is a tensor product of state vectors in each space $\mathcal{H}$ and $\mathcal{H}^\star$, and thus has the form:
    \begin{equation}
        \Ket{\Phi} = \ket{\psi} \otimes \bra{\phi} \in \mathcal{H} \otimes \mathcal{H}^\star.
    \end{equation}
\end{definition}
Separable two-state vectors correspond to pairs of evolved pre- and post-selected system states. This allows for a description of the system using concepts derived from the framework of standard quantum mechanics. 

Things get more interesting when one considers the possibility of obtaining non-separable two-state vectors, which have no straightforward analogue in conventional theory. The existence of non-separable two-state vectors holds a somewhat similar status (though not equivalent) to that of pure bipartite entangled states in quantum mechanics. A pure quantum state is entangled if and only if it is non-separable. Adopting a similar definition for two-state vectors implies the existence of a new kind of entanglement between the past and the future.

It would be problematic if there were no means of experimentally distinguishing non-separable two-state vectors from classical mixtures of separable two-state vectors. The following example, which has already been studied using a slightly different approach \cite{s_2014}, shows that at least some non-separable two-state vectors are, in fact, indistinguishable from classical mixtures of separable two-state vectors.

\begin{example}
    Let $\{\ket{0}, \ket{1}\}$ denote an orthonormal basis of the space $\mathcal{H}$. Consider the non-separable two-state vector:
    \begin{equation}
        \Ket{\Psi} = \frac{1}{\sqrt{2}} \left(\ket{0} \otimes \bra{0} + \ket{1} \otimes \bra{1}\right).
    \end{equation}
    In the following, we show that there exists a statistical mixture of separable two-state vectors that replicates the probability distribution of distinct measurement outcomes in every story $(\Ket{\Psi}, \{\hat{P}_i\})$. Notice that, for every measurement $\{\hat{P}_1,\hat{P}_2\}$, the probabilities of distinct measurement outcomes in a story $(\Ket{\Psi},\{\hat{P}_1,\hat{P}_2\})$ are equal. Let us define a statistical mixture consisting of separable vectors $\Ket{\Phi_1} = \ket{0} \otimes \bra{0}$ and $\Ket{\Phi_2} = \ket{1} \otimes \bra{1}$ that occur with equal probabilities. Analogically as in Example~\ref{example_1}, the probabilities of distinct measurement outcomes for this classical mixture are also equal. \qed
\end{example}

If, for every non-separable vector $\Ket{\Psi} \in \mathcal{H} \otimes \mathcal{H}^\star$, there existed a statistical mixture of separable two-state vectors that could replicate the probability distribution of distinct measurement outcomes in every story $(\Ket{\Psi}, \{\hat{P}_i\})$, the notion of entanglement in relation to two-state vectors would be rendered meaningless. Fortunately, not every non-separable two-state vector is indistinguishable from a classical mixture of separable two-state vectors. We show this by providing a counterexample.

\begin{example}
    Let $\{\ket{0}, \ket{1}, \ket{2}\}$ denote an orthonormal basis of the space $\mathcal{H}$. Consider the non-separable two-state vector:
    \begin{equation}
        \Ket{\Psi} = \frac{1}{\sqrt{3}} \left(\ket{0} \otimes \bra{0} + \ket{1} \otimes \bra{1} - \ket{2} \otimes \bra{2} \right).
    \end{equation}
    Every separable vector $\Ket{\Phi} \in \mathcal{H} \otimes \mathcal{H}^\star$ can be represented as:
    \begin{multline}
        \Ket{\Phi} = \left( \sum_k \alpha_k \ket{k}\right) \otimes \left(\sum_l \beta_l \bra{l}\right), \\ \text{where }\quad \sum_k \left|\alpha_k\right|^2 = \sum_l \left|\beta_l\right|^2 = 1.
    \end{multline}
    In the following, we prove by contradiction that no statistical mixture of such separable vectors can replicate the probability distribution of distinct measurement outcomes in every story $(\Ket{\Psi}, \{\hat{P}_i\})$. Let us define four sets of projective operators:
    \begin{equation}
        \left\{\begin{aligned}
            &\hat{P}_1^{(1)} = \ketbra{0},& &\hat{P}_2^{(1)} = \ketbra{1} + \ketbra{2}, \\
            &\hat{P}_1^{(2)} = \ketbra{1},& &\hat{P}_2^{(2)} = \ketbra{0} + \ketbra{2}, \\
            &\hat{P}_1^{(3)} = \ketbra{+},& &\hat{P}_2^{(3)} = \ketbra{-} + \ketbra{2}, \\
            &\hat{P}_1^{(4)} = \ketbra{+i},& &\hat{P}_2^{(4)} = \ketbra{-i} + \ketbra{2}.
        \end{aligned}\right.
    \end{equation}
    For all $i \in \{1,2,3,4\}$, the probability of a measurement outcome which corresponds to a projective operator $\hat{P}_2^{(i)}$ in a story $(\Ket{\Psi},\{\hat{P}_1^{(i)},\hat{P}_2^{(i)}\})$ is zero. Thus, every separable vector $\Ket{\Phi}$ belonging to the considered statistical mixture has to fulfill the following conditions:
    \begin{widetext}
    \begin{equation}\label{conditions}
        \left\{\begin{aligned}
            &\Tr{\hat{P}_2^{(1)} \Ket{\Phi}} = \alpha_1 \beta_1 + \alpha_2 \beta_2 = 0, \\
            &\Tr{\hat{P}_2^{(2)} \Ket{\Phi}} = \alpha_0 \beta_0 + \alpha_2 \beta_2 = 0, \\
            & \Tr{\hat{P}_2^{(3)} \Ket{\Phi}} = \frac{1}{2} \alpha_0 \beta_0 + \frac{1}{2} \alpha_1 \beta_1 - \frac{1}{2} \alpha_0 \beta_1 - \frac{1}{2} \alpha_1 \beta_0 + \alpha_2 \beta_2 = 0,\\
            & \Tr{\hat{P}_2^{(4)} \Ket{\Phi}} = \frac{1}{2} \alpha_0 \beta_0 + \frac{1}{2} \alpha_1 \beta_1 - \frac{i}{2} \alpha_0 \beta_1 + \frac{i}{2} \alpha_1 \beta_0 + \alpha_2 \beta_2 = 0.
        \end{aligned}\right.
    \end{equation}
    \end{widetext}
    Additionally, since the pair $(\Ket{\Phi},\{\hat{P}_1^{(1)},\hat{P}_2^{(1)}\})$ forms a story, we get:
    \begin{equation}
        \Tr{\hat{P}_1^{(1)} \Ket{\Phi}} = \alpha_0 \beta_0 \ne 0.
    \end{equation}
    Hence, the system of equations~\eqref{conditions} reduces to:
    \begin{equation}
        \left\{\begin{aligned}
            &\alpha_0 \beta_0 = \alpha_1 \beta_1 = - \alpha_2 \beta_2 \ne 0, \\
            & \alpha_0 \beta_1 = \alpha_1 \beta_0 = 0.
        \end{aligned}\right.
    \end{equation}
    From the first condition, we deduce that the coefficients $\alpha_0,$ $\beta_0,$ $\alpha_1,$ $\beta_1$ are nonzero. However, in this case, the second equation cannot be satisfied. Thus, the obtained system is inconsistent. \qed
\end{example}

We have demonstrated that for at least one non-separable two-state vector, it is impossible to find a classical mixture of separable two-state vectors that would yield the same probability distribution of distinct outcomes for every  measurement.

We conclude that the set of non-separable two-state vectors consists of two disjoint subsets, which differ in the distinguishability of their elements from classical mixtures of separable two-state vectors. This division is in stark contrast with the case of standard bipartite non-separable states, which are always distinguishable from classical mixtures of separable states. For the future investigation, we will call the distinguishable two-state vectors as \textit{strictly non-separable}. 

\begin{definition}
A two-state vector $\Ket{\Psi} \in \mathcal{H} \otimes \mathcal{H}^\star$ is \textit{strictly non-separable} if it can be experimentally distinguished from any (classical mixture of) separable two-state vectors.
\end{definition}

Strictly non-separable two-state vectors can be seen as the true counterparts of pure entangled states. One might consider them as the genuine manifestation of entanglement between the past and the future. An important question that arises is whether there exists a method for detecting such entanglement, analogous to conventional Bell inequalities or entanglement witnesses. This presents a compelling direction for future research.

\section{Discussion and conclusions}

We have introduced the concept of a story, which holds a fundamental physical significance within the framework of the two-state vector formalism. Using this concept, we thoroughly examined the structure of the twin space in relation to the possibility of forming stories with two-state vectors. We have demonstrated that every element of the twin space forms a story with an appropriately chosen measurement. Additionally, we formulated a sufficient condition for a two-state vector to form a story with every measurement. For a fixed measurement, we have shown that the set of two-state vectors that do not form a story with this measurement is a linear subspace of zero measure in the whole twin space, whereas the set comprising two-state vectors that form a story with this measurement is not a linear subspace.

In subsequent sections, we have examined the issue of the distinguishability of two-state vectors, concluding that some two-state vectors are indistinguishable from others or their statistical mixtures. In particular, we discussed an example of a non-separable two-state vector that is indistinguishable from a statistical mixture of separable two-state vectors. Fortunately, we have found that not every non-separable two-state vector is indistinguishable from such mixture, as we have demonstrated by providing a counterexample. This led us to formulate the definition of strictly non-separable vectors as the true counterparts of pure bipartite entangled states in conventional quantum theory.

Many other topics remain to be explored. Our conclusions serve merely as a starting point for a comprehensive investigation into the entanglement between the past and the future, adopting an approach different from those previously presented \cite{b_2004,rcgwf_2018,n_2017}. The non-separable two-state vectors that emerge within the structure of the twin space bear a resemblance to indefinite causal structures \cite{capv_2013} studied in the context of non-classical gravity \cite{zcpb_2019} and relativistic motion \cite{dmgmb_2020,dzcd_2023,nch_2018}. The correspondence between both descriptions has been addressed to some extent \cite{s_2017,ljqld_2024}, but it requires further attention. Finally, it would be beneficial to generalize our considerations to the relativistic framework, proposed either as a method of constructing a covariant quantum field theory of tachyons \cite{d_2023} or a theory of quantum gravity \cite{o_2015,r_2004}.

\begin{acknowledgments}
We would like to thank Rafał Demkowicz-Dobrzański for useful comments.
\end{acknowledgments}

\bibliography{stories}

\end{document}